\documentclass[runningheads]{llncs}

\usepackage[T1]{fontenc}
\usepackage{lmodern}
\usepackage{amsmath,amssymb,mathtools}
\usepackage[ruled,vlined,linesnumbered]{algorithm2e}
\usepackage{booktabs}
\usepackage{xspace}
\usepackage{url}
\usepackage{microtype}
\usepackage{xcolor}

\usepackage{orcidlink}
\providecommand{\orcidID}[1]{}
\renewcommand{\orcidID}[1]{\orcidlink{#1}}

\newcommand{\SA}{\mathsf{SA}}
\newcommand{\ISA}{\mathsf{ISA}}
\newcommand{\LCP}{\mathsf{LCP}}

\newcommand{\LCE}{\mathsf{LCE}}
\newcommand{\range}{\mathsf{range}}

\newcommand{\MRep}{\mathsf{MRep}}
\newcommand{\occ}{\mathsf{occ}}
\newcommand{\eL}{e_{\mathrm{L}}}
\newcommand{\eR}{e_{\mathrm{R}}}
\newcommand{\Substr}{\mathsf{Substr}}

\newcommand{\LChar}{\mathsf{LChar}}
\newcommand{\RChar}{\mathsf{RChar}}
\newcommand{\DAWG}{\mathsf{DAWG}}

\newcommand{\GNP}{\mathsf{GNP}}
\newcommand{\sGNP}{s_{\GNP}}
\newcommand{\tGNP}{t_{\GNP}}
\newcommand{\eps}{\varepsilon}

\newcommand{\leftcl}{\overleftarrow}
\newcommand{\rightcl}{\overrightarrow}
\newcommand{\bothcl}{\overleftrightarrow}
\newcommand{\rank}{\mathsf{rank}}
\newcommand{\leftpos}{\mathsf{left}}
\newcommand{\rightpos}{\mathsf{right}}

\newcommand{\str}{\mathsf{str}}

\title{Output-Sensitive Construction of CDAWGs from BWT-Runs}
\titlerunning{Output-Sensitive Construction of CDAWGs from BWT-Runs}

\author{Yuta Tsuruzono\inst{1} \and Hiroki Arimura\inst{1}\orcidID{0000-0002-2701-0271} \and Shunsuke Inenaga\inst{2}\orcidID{0000-0002-1833-010X}}

\authorrunning{Tsuruzono, Arimura, Inenaga}

\institute{Graduate School of IST, Hokkaido University, Sapporo, Japan\\
\email{arim@ist.hokudai.ac.jp}
\and
Department of Informatics, Kyushu University, Fukuoka, Japan\\
\email{inenaga.shunsuke.380@m.kyushu-u.ac.jp}}

\begin{document}
\maketitle

\begin{abstract}
  The \emph{compact directed acyclic word graph} (\emph{CDAWG}) of a string can be viewed in two equivalent ways: as the edge-compacted DAWG of the string, and as the DAG obtained from the suffix tree by merging the nodes whose subtrees are isomorphic.  By exploiting these two views in opposite directions,
  we show how to build, for the (reversed) input string of length $n$, the CDAWG with $\eL$ edges in $O(\eL\log n\log(n/r))$ time with $O(r\log(n/r)+\eL)$ words of working space,
  provided that the fully functional compressed suffix tree of Gagie, Navarro, and Prezza of size $O(r\log(n/r))$ is available.
  Here, $r$ denotes the number of BWT-runs of the input string.
%  This gives the first output-sensitive CDAWG construction using BWT-run-bounded working space apart from the output.
%\keywords{CDAWG \and DAWG \and Weiner links \and BWT-runs \and compressed suffix tree}
\end{abstract}

\section{Introduction}

The \emph{directed acyclic word graph} (\emph{DAWG}) of a string is the smallest automaton recognizing all substrings of the string~\cite{BlumerEtAl1985}.  Its edge-compacted version is the compact directed acyclic word graph (CDAWG)~\cite{BlumerEtAl1987}.  Equivalently, the CDAWG can be obtained from the suffix tree~\cite{Weiner1973} by merging the nodes whose subtrees are isomorphic.  This second view is particularly useful algorithmically: a CDAWG node can be identified by a suffix-tree subtree-equivalence class, while its represented strings are maximal repeats~\cite{NarisawaEtAl2017}.  Thus their size can be much smaller than the text length for highly repetitive strings.

A second classical ingredient is the Weiner-link structure of the suffix tree~\cite{Weiner1973}.  A Weiner link prepends a character to the represented string.
Under reversal, such left extensions become ordinary edges in the DAWG of the reversed string.
% DELETED/REPLACED to avoid repeating the hard/soft = primary/secondary explanation here; the formal statement is in Preliminaries.
%Our construction is based on this correspondence.  We traverse Weiner links on the forward suffix-tree side, but we interpret the added edge representations as edges of the reversed-string DAWG; the rule deciding which edges are to be compacted is supplied by the suffix-tree subtree-isomorphism criterion on the forward string.

Another well-studied measure of repetitiveness is the number $r$ of equal-letter runs in the Burrows-Wheeler transform (BWT).  The r-index and the fully functional compressed suffix tree of Gagie, Navarro, and Prezza~\cite{GagieNavarroPrezza2020} provide suffix-array and suffix-tree functionality in BWT-run bounded space.  Nishimoto and Tabei's r-enum~\cite{NishimotoTabei2021} showed that several characteristic substrings, including maximal repeats, can be enumerated in $O(r)$ words of working space by traversing all right-maximal repeats through Weiner links.
Kimura and I~\cite{KimuraI2026} subsequently reduced the traversal time to $O(n)$ by using the move data structure for constant-time LF-mapping.
%Their bound charges range-distinct outputs and LF-mapping operations to all Weiner links encountered during the traversal.  Consequently, it remains linear in the text length even when the number of maximal repeats, or the size of the target CDAWG, is much smaller.  In contrast, our objective is to avoid enumerating all right-maximal repeats and to charge the traversal only to left extensions of maximal repeats, skipping unary regions that do not create CDAWG nodes.

In this paper, we present an output-sensitive construction of CDAWGs from BWT-run bounded primitives.
Since the maximal repeats are the nodes of the CDAWG,
our algorithm can be seen as a generalization and extension of the aforementioned r-enum algorithms.
Namely, our algorithm starts from the empty string and traverses Weiner links, in the spirit of r-enum.  
Hard and soft Weiner links play different conceptual roles.  A hard Weiner link leads to a right-maximal locus and is therefore the same kind of step that r-enum uses for maximal-repeat discovery; in our setting, it also represents a primary DAWG edge under string reversal.  A soft Weiner link, on the other hand, is not needed merely to enumerate maximal repeats, but it is needed for CDAWG construction because it represents a secondary DAWG edge under string reversal.  Thus processing soft Weiner links is one essential difference from r-enum.
Computationally, both cases are then handled by the same closure routine: for each left extension $aw$ of a maximal repeat, we compute the right closure $q=\rightcl{aw}$ and then $h=\leftcl{q}=\bothcl{aw}$ by a doubling-then-binary-search procedure on suffix-array intervals.  The algorithm uses $\mathsf{key}(h)$ as the target key, and equal target keys are merged according to the suffix-tree subtree-equivalence criterion of Narisawa et al.~\cite{NarisawaEtAl2017}.
% DELETED/REPLACED: the former text described only the uniform closure computation and did not make explicit the conceptual roles of hard and soft Weiner links or the difference from r-enum.
% DELETED/REPLACED: the former text separately described skipping unary Weiner-link paths, testing right-maximality, testing left-maximality, and closure; these are the same closure computation described below.

%Our construction is left-oriented:
Let $T=\$S\#$ be the input string, and let $r$ be the number of BWT-runs of $T$.  Let $\eL$ be the number of left-extensions of maximal repeats in $T$.  We show that the CDAWG topology of $S^R\$$ can be constructed in $O(\eL\log n\log(n/r))$ time using $O(r\log(n/r)+\eL)$ words of working space.  Here $\eL$ is measured on the forward working string $T$, whereas the output CDAWG is for the reversed string $S^R\$$; equivalently, $\eL(T)=\eR(T^R)$, and the final marker deletion affects only a single edge from the node for $\eps$.  This bound is obtained by combining our output-sensitive Weiner-link traversal with the fully functional compressed suffix tree of Gagie, Navarro, and Prezza~\cite{GagieNavarroPrezza2020}.
We also compute the metadata needed to represent edge labels without expanding them explicitly within the claimed bounds, following the CDAWG-grammar framework of Belazzougui and Cunial~\cite{BelazzouguiCunial2017CPM,BelazzouguiCunial2017SPIRE}.

\section{Preliminaries}

%\subsection{Strings}
\noindent \textbf{Strings.}
Let $\Sigma$ be an alphabet.  An element of $\Sigma$ is called a character, and an element of $\Sigma^*$ is called a string.  For a string $S$, let $|S|$ denote its length.  The empty string is denoted by $\eps$.  For any $1\leq i\leq |S|$, $S[i]$ denotes the $i$-th character of $S$.  For any $1\leq i\leq j\leq |S|$, $S[i..j]$ denotes the substring of $S$ that begins at position $i$ and ends at position $j$.  For convenience, let $S[i..j]=\eps$ if $i>j$.  Let $S^R=S[|S|]\cdots S[1]$ denote the reversed string of $S$.

A string $w$ occurs in $S$ iff $w\in\Substr(S)$.  Let $\occ_S(w)$ denote the number of occurrences of $w$ in $S$.  We use two special characters $\$$ and $\#$ that do not belong to the input alphabet and occur only when explicitly introduced.

%\subsection{Maximal repeats, DAWGs, and CDAWGs}
\noindent \textbf{Maximal repeats and CDAWGs.}
For a substring $w$ of $S$, let $\LChar_S(w)$ and $\RChar_S(w)$ be the sets of characters that occur immediately to the left and to the right of an occurrence of $w$, respectively.  A substring $w$ is left-maximal if $|\LChar_S(w)|\ge 2$, right-maximal if $|\RChar_S(w)|\ge 2$, and a maximal repeat if it is both left-maximal and right-maximal.  Let $\MRep(S)$ be the set of maximal repeats of $S$.  We define
\[
  \eL(S)=\sum_{w\in\MRep(S)} |\LChar_S(w)|,
  \qquad
  \eR(S)=\sum_{w\in\MRep(S)} |\RChar_S(w)|.
\]
Thus $\eL(S)$ and $\eR(S)$ are the numbers of left- and right-extensions of maximal repeats in $S$, respectively.
Since $\MRep(S^R)=\{w^R\mid w\in\MRep(S)\}$ holds for any string $S$, the CDAWG of $S^R$ naturally represents left-extensions of maximal repeats of $S$; in particular, $\eR(S^R)=\eL(S)$.
For a substring $x$ of $S$, the right-closure $\rightcl{x}$ is the shortest right-maximal string $y$ such that $x$ is a prefix of $y$ and $\occ_S(x)=\occ_S(y)$.  The left-closure $\leftcl{x}$ is defined symmetrically.  We write $\bothcl{x}=\leftcl{\rightcl{x}}$.
%Note that $\MRep(S) \cup \{S\} = \{\bothcl{x} : x \in \Substr(S)\}$.
For substrings with at least two occurrences, applying both closures yields maximal repeats.

The DAWG of $S$, denoted $\DAWG(S)$, is the smallest automaton recognizing the substrings of $S$~\cite{BlumerEtAl1985}.  The CDAWG of $S$, denoted $\mathrm{CDAWG}(S)$, is obtained by compacting non-branching paths of the DAWG~\cite{BlumerEtAl1987}.  We use the standard correspondence that CDAWG nodes, except for the sink, are represented by maximal repeats.  For a CDAWG node $v$, let $\str(v)$ denote the longest string represented by $v$.  The outgoing edges from the node corresponding to a maximal repeat $w$ are in one-to-one correspondence with the right-extensions of $w$; consequently, the number of CDAWG edges leaving maximal-repeat nodes is $\eR(S)$.
We also use the suffix-tree characterization of CDAWGs.  If one considers the suffix tree of $S$, then merging loci whose rooted subtrees are isomorphic yields the CDAWG of $S$~\cite{BlumerEtAl1987}.  Thus the node-identification rule for CDAWG nodes can be expressed without referring to the DAWG edges themselves: two nodes are identified exactly when their suffix-tree subtrees have the same branching structure and edge labels.
Following the work by Narisawa et al.~\cite{NarisawaEtAl2017}, our algorithm uses this characterization as the rule for deciding which intermediate endpoints are to be merged.

A Weiner link~\cite{Weiner1973} in the suffix tree prepends a character: it maps the locus of $w$ to the locus of $aw$, when such a locus exists.  Under string reversal, the left extension $aw$ of $w$ corresponds to the right extension $w^Ra$ of $w^R$.  Consequently, hard and soft Weiner links on the suffix-tree side of $S$ correspond to primary and secondary edges, respectively, of $\DAWG(S^R)$ in the terminology of Blumer et al.~\cite{BlumerEtAl1985,BlumerEtAl1987}.  After compacting the DAWG edges of $S^R$, one obtains $\mathrm{CDAWG}(S^R)$.  This is the directionality exploited in this paper: we traverse Weiner links on the forward string, but the edge representations added by the algorithm are interpreted as DAWG/CDAWG edges of the reversed string.

%\subsection{Suffix-array primitives}
\noindent \textbf{Suffix array primitives in BWT-run bounds.}
Let $S$ be a string of length $n$.  The suffix array, inverse suffix array, and LCP array of $S$ are denoted by $\SA$, $\ISA$, and $\LCP$, respectively.  For two text positions $i,j$, $\LCE(i,j)$ denotes the length of the longest common prefix of suffixes $S[i..n]$ and $S[j..n]$.  For a substring $w$ of $S$, let $\range(w)=[L_w,R_w]$ denote its suffix-array interval.
We assume access to a fully functional compressed suffix-tree interface that supports exactly the primitives used by the algorithm: $\SA$, $\ISA$, $\LCP$, $\LCE$, LF-mapping, BWT range-distinct queries, and the suffix-tree navigation needed for the subtree-equivalence test.  For a string $S$, let $\sGNP(S)$ be the number of words used by this black-box representation, and let $\tGNP(S)$ be an upper bound on the time for any one of these required primitives.  With the fully functional compressed suffix tree of Gagie, Navarro, and Prezza~\cite{GagieNavarroPrezza2020}, these primitives can be supported with $\sGNP(S)=O(r\log(n/r))$ words and $\tGNP(S)=O(\log(n/r))$ time, where $n=|S|$ and $r$ is the number of runs in the BWT of $S$.

\section{Construction Framework}

Let $S$ be the input string and let $T=\$S\#$ be the working string, where $\$$ and $\#$ are fresh characters.  After constructing the left-oriented topology for $T$, we reverse all edges and delete the unique edge from the node for $\eps$ whose label begins with $\#$, obtaining $\mathrm{CDAWG}(S^R\$)$.
Our construction framework is given in Algorithm~\ref{alg:framework}.
\begin{algorithm}[H]
\caption{Left-oriented construction on $T=\$S\#$}
\label{alg:framework}
Initialize the traversal at $\eps$ and maintain a set of discovered maximal repeats\;
\ForEach{discovered maximal repeat $w$}{
  Enumerate all left-extension triples $(a,p_a,q_a)$ of $w$ by a BWT range-distinct query\;
  \ForEach{triple $(a,p_a,q_a)$}{
    Compute $\range(aw)$ by LF-mapping\;
    \If{$\range(aw)$ is a singleton}{
      Add an edge representation with source $\mathsf{key}(w)$ and target the sink\;
    }
    \Else{
      Compute $q=\rightcl{aw}$ and $h=\leftcl{q}=\bothcl{aw}$\;
      Put $m=|q|$ and $d=|h|-m$\;
      Add an edge representation from $\mathsf{key}(w)$ to the canonical key $\mathsf{key}(h)$, storing $a$, $m$, and $d$\;
    }
  }
}
Sort all canonical keys and merge equal keys\;
Reverse the resulting edge orientation and delete the unique edge from the node for $\eps$ beginning with $\#$\;
Output $\mathrm{CDAWG}(S^R\$)$ with compact edge-label representations\;
\end{algorithm}

In the sequel, we briefly describe our algorithm.

For a substring $x$ of $T$, let $\range(x)=[L_x,R_x]$ be its suffix-array interval.  We represent a suffix-tree locus of string depth $d$ and interval $[L,R]$ by the key $\mathsf{key}([L,R],d)$.
For a maximal repeat $w\in\MRep(T)$, we write $\mathsf{key}(w)=\mathsf{key}([L_w,R_w],|w|)$.
This representation is unique for maximal repeats; in particular, $\mathsf{key}(\eps)=\mathsf{key}([1,n],0)$ represents the node for $\eps$.  The sink is not represented by an interval-depth key.

CDAWG nodes are obtained by merging suffix-tree loci whose rooted subtrees are isomorphic~\cite{BlumerEtAl1987,NarisawaEtAl2017}.
Our construction performs this merging by assigning the same canonical key to edge targets that represent the same maximal-repeat node.  Hard links are the maximal-repeat-discovery steps familiar from r-enum and give primary DAWG edges after reversal, whereas soft links are additionally needed to output secondary DAWG edges.  Algorithmically, both cases use the same rule: for a left extension $aw$ of a maximal repeat $w$, the added edge representation has source $\mathsf{key}(w)$ and has either the sink as its target, if $aw$ is unique, or the canonical key $\mathsf{key}(h)$, where $h=\bothcl{aw}$.

An edge representation whose target is a maximal-repeat node is a tuple $(\mathsf{key}(w),\mathsf{key}(h),a,m,d)$,
where $w\in\MRep(T)$, $a\in\LChar_T(w)$, $h=\bothcl{aw}$, $m=|\rightcl{aw}|$, and $d=|h|-m$.  The values $m$ and $d$ are part of the edge-label representation after reversing the orientation.  An edge representation whose target is the sink stores the information needed to represent its suffix-tail label.

All interval operations are implemented by a black-box fully functional compressed suffix tree for $T$,
which supports $\SA$, $\ISA$, $\LCP$, $\LCE$, LF-mapping, BWT range-distinct queries, and the suffix-tree navigation needed for subtree-equivalence tests.  The time and space bounds of this interface are denoted by $\tGNP(T)$ and $\sGNP(T)$, respectively.
The number of processed left extensions is $O(\eL(T))$, where $\eL(T)$ denotes the number of left-extensions of maximal repeats in $T$.  The formal computation of $\rightcl{aw}$ and $\leftcl{\rightcl{aw}}$ is given in the next section.

\section{Traversing Maximal Repeats}\label{sec:traversal}

This section gives the interval operations used in Algorithm~\ref{alg:framework}.  Throughout this section, all suffix-array intervals are taken with respect to the working string $T$ of length $n$.

\subsection{Enumerating one-step left extensions}

Let $w\in\MRep(T)$ and let $\range(w)=[L_w,R_w]$.  A BWT range-distinct query on $[L_w,R_w]$ returns the triples
\[
  (a,p_a,q_a)
\]
such that the occurrences of $w$ preceded by $a$ correspond to the BWT subrange $[p_a,q_a]$.  Applying LF-mapping gives
\[
  \range(aw)=[\mathsf{LF}(p_a),\mathsf{LF}(q_a)].
\]
Thus every left-extension of $w$ is enumerated once, and no other character is enumerated.  If this interval is a singleton, then $aw$ has a unique occurrence and the corresponding edge has the sink as its target.

Assume henceforth that $[L,R]=\range(aw)$ satisfies $L<R$.  Let
\[
  q=\rightcl{aw}
\]
be the right-closure of $aw$.  Its string depth is the common prefix length of all suffixes in the interval $[L,R]$.  For suffix-array intervals, this value is obtained from the two lexicographically extreme suffixes, and hence
% DELETED/REPLACED: the former sentence described the same right-closure operation as following a unary suffix-tree path.
\[
  |q|=\LCE(\SA[L],\SA[R]).
\]
We denote this value by $m$.

\subsection{Testing a candidate left growth}

It remains to compute $\leftcl{q}$.  For an integer $d\geq 0$, define
\[
  i_d=\SA[L]-d,\qquad j_d=\SA[R]-d.
\]
If $i_d<1$ or $j_d<1$, then $d$ is invalid.  Otherwise put
\[
  \alpha_d=\min\{\ISA[i_d],\ISA[j_d]\},\qquad
  \beta_d=\max\{\ISA[i_d],\ISA[j_d]\},
\]
and the candidate length is $m+d$.  Thus $\alpha_d$ and $\beta_d$ are the suffix-array ranks of the two leaves obtained by moving $d$ characters to the left from the leftmost and rightmost leaves of $\range(q)=[L,R]$.
% DELETED/REPLACED: the previous predicate also used the two outside LCP-boundary tests for certifying an exact suffix-array interval.  For the left-growth search, the relevant test is instead whether the two shifted boundary leaves still span exactly $R-L+1$ suffixes and share the candidate prefix.
We define the predicate $\mathsf{Valid}(L,R,m,d)$ to be true iff the following two conditions hold:
\[
  \LCE(\SA[\alpha_d],\SA[\beta_d])\geq m+d,
\]
\[
  \beta_d-\alpha_d=R-L.
\]
The first condition checks, by one LCE query, whether the two shifted boundary leaves share the candidate string of length $m+d$.  The second condition checks that these two shifted leaves still span exactly $R-L+1$ suffix-array positions, i.e., the occurrence count of $q$ is preserved.  No LCE query against the outside neighbors of $[\alpha_d,\beta_d]$ is needed in this left-growth test.

\begin{lemma}\label{lem:valid-left-growth}
For every $d\geq 0$, $\mathsf{Valid}(L,R,m,d)$ holds iff all occurrences of $q$ can be extended by the same length-$d$ string to the left.  If it holds, then the resulting string has suffix-array interval $[\alpha_d,\beta_d]$.
\end{lemma}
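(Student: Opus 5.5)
The plan is to argue both directions using only two facts: the set of occurrences of $q$ is $\{\SA[k] : L\le k\le R\}$, and the suffix-array interval of any substring is a contiguous block of ranks. Throughout I use $L<R$, which we have assumed, so $\SA[L]\neq\SA[R]$ and hence $i_d\neq j_d$. Write $u=T[i_d..\SA[L]-1]$ for the length-$d$ string to the left of the occurrence of $q$ at $\SA[L]$, and $x=uq=T[i_d..i_d+m+d-1]$.

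\emph{($\Leftarrow$).} Suppose every occurrence of $q$ is preceded by the same string $u$ of length $d$.
\begin{enumerate}
\item Every start position $\SA[k]-d$ is at least $1$. In particular $i_d,j_d\ge 1$, so $d$ is not invalid.
\item The occurrences of $x=uq$ are exactly the positions $\SA[k]-d$ for $L\le k\le R$. One inclusion is the assumption. For the other, any occurrence of $uq$ yields an occurrence of $q$ shifted by $d$. Hence $\range(x)$ is a contiguous interval of size $R-L+1$.
\item Every suffix starting at a position $\SA[k]-d$ begins with the common prefix $u$. So their relative lexicographic order equals the order of the suffixes at the positions $\SA[k]$.
\item By step 3, $\ISA[i_d]$ is the minimum of $\range(x)$ and $\ISA[j_d]$ is its maximum. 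Thus $\alpha_d=\ISA[i_d]$ and $\beta_d=\ISA[j_d]$.
\item Consequently $\beta_d-\alpha_d=R-L$, and $[\alpha_d,\beta_d]=\range(x)$.
\item Both of these suffixes start with $x$, so $\LCE(\SA[\alpha_d],\SA[\beta_d])\ge m+d$.
\end{enumerate}
This proves $\mathsf{Valid}(L,R,m,d)$ and also the second claim of the lemma.

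\emph{($\Rightarrow$).} Suppose $\mathsf{Valid}(L,R,m,d)$ holds. Then $i_d,j_d\ge1$, and the suffixes at positions $i_d$ and $j_d$ share a prefix of length $m+d$.
\begin{enumerate}
\item The suffix at $i_d$ begins with $x$. The suffix of $x$ of length $m$ is $q$, because it starts at $\SA[L]$ and $|q|=m$.
\item Both ranks $\alpha_d$ and $\beta_d$ lie in $\range(x)$, and $\range(x)$ is contiguous. Hence $[\alpha_d,\beta_d]\subseteq\range(x)$, which gives $\occ_T(x)\ge\beta_d-\alpha_d+1=R-L+1=\occ_T(q)$.
\item The map sending an occurrence $p$ of $x$ to the occurrence $p+d$ of $q$ is injective. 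So $\occ_T(x)\le\occ_T(q)$.
\item Combining steps 2 and 3, this map is a bijection. Hence every occurrence of $q$ is preceded by the same string $u$.
\item Also $|\range(x)|=R-L+1=|[\alpha_d,\beta_d]|$, so $\range(x)=[\alpha_d,\beta_d]$.
\end{enumerate}

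The step I expect to need the most care is the counting argument in ($\Rightarrow$). The condition $\beta_d-\alpha_d=R-L$ only constrains the two extreme ranks. It must be combined with contiguity of $\range(x)$ and the injectivity bound to rule out two failure modes: $x$ having extra occurrences, and some occurrence of $q$ being preceded by a different string or being too close to the start of $T$. I will state the injection explicitly. I will also note that $L<R$ guarantees $\alpha_d<\beta_d$, so the interval bound is meaningful and the min/max in the definition of $\alpha_d,\beta_d$ is well defined.
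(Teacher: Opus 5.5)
Your proof is correct and follows essentially the same route as the paper. For the Valid-to-extension direction you use contiguity of $\range(x)$ to get at least $R-L+1$ occurrences of $x$, and the shift-by-$d$ injection into occurrences of $q$ for the matching upper bound. For the converse you use that prepending the common string $u$ preserves lexicographic order. Compared with the paper, you make explicit that the length-$m$ suffix of $x$ is $q$ and that the occurrences of $uq$ are exactly the shifted occurrences of $q$; the paper leaves both implicit.
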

\begin{proof}
Suppose first that $\mathsf{Valid}(L,R,m,d)$ holds.  Let $x$ be the common prefix of length $m+d$ of the two suffixes $T[i_d..n]$ and $T[j_d..n]$.  Since suffix-array intervals are contiguous, every suffix whose rank lies between $\alpha_d$ and $\beta_d$ also has prefix $x$.  By $\beta_d-\alpha_d=R-L$, this gives exactly $R-L+1$ suffixes.  Removing the first $d$ characters from these occurrences of $x$ yields $R-L+1$ distinct occurrences of $q$.  Since $\range(q)=[L,R]$ has exactly $R-L+1$ occurrences, these are all occurrences of $q$.  Hence every occurrence of $q$ is preceded by the same length-$d$ string.  Moreover, no suffix outside $[\alpha_d,\beta_d]$ can have prefix $x$, since that would yield an additional occurrence of $q$ after deleting the first $d$ characters.  Thus $[\alpha_d,\beta_d]$ is the suffix-array interval of the resulting string.

Conversely, suppose all occurrences of $q$ are extended by the same length-$d$ string to the left, and let $x$ be the resulting string of length $m+d$.  Prepending the same string to all suffixes in $[L,R]$ preserves their lexicographic order.  Hence the suffixes starting at $i_d$ and $j_d$, obtained from the leftmost and rightmost leaves of $[L,R]$, are the leftmost and rightmost suffixes in the interval of $x$.  Therefore they have LCE at least $m+d$, and their ranks satisfy $\beta_d-\alpha_d=R-L$.  Thus $\mathsf{Valid}(L,R,m,d)$ holds.
\qed
\end{proof}

\begin{corollary}\label{cor:left-closure-search}
The set of valid values of $d$ is an interval $\{0,1,\ldots,d^*\}$.  Moreover, $\leftcl{q}$ has string depth $m+d^*$ and suffix-array interval $[\alpha_{d^*},\beta_{d^*}]$.
\end{corollary}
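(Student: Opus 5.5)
The plan is to derive both claims directly from Lemma~\ref{lem:valid-left-growth}. For each $d\ge 0$ let $P(d)$ be the property that all occurrences of $q$ are preceded by one common string of length $d$. By the lemma, $P(d)$ holds iff $\mathsf{Valid}(L,R,m,d)$ holds. So it suffices to show that $\{d : P(d)\}$ is downward closed and finite, and then to identify its maximum with the left closure.

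First, $P(0)$ holds trivially, since the common string is $\eps$. Next, downward closure: if all occurrences of $q$ are preceded by the same string $u$ with $|u|=d$, then for every $d'<d$ they are all preceded by the length-$d'$ suffix of $u$. Hence $P(d)$ implies $P(d')$. For finiteness, any valid $d$ satisfies $d<i_0=\SA[L]$, because otherwise $i_d<1$ and $d$ is invalid by definition. The valid set is therefore a nonempty, bounded, downward-closed subset of $\mathbb{N}$, i.e., $\{0,\ldots,d^*\}$ where $d^*$ is its maximum.

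For the second claim, let $u$ be the common length-$d^*$ string and put $x=uq$. By the lemma, $x$ has suffix-array interval $[\alpha_{d^*},\beta_{d^*}]$, of size $R-L+1$. Thus $\occ_T(x)=\occ_T(q)$, $q$ is a suffix of $x$, and $|x|=m+d^*$.

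It remains to show $x=\leftcl{q}$. Take any string $y$ that has $q$ as a suffix and satisfies $\occ_T(y)=\occ_T(q)$. Each occurrence of $y$ yields an occurrence of $q$ at its end, and distinct occurrences of $y$ yield distinct occurrences of $q$. Since the counts are equal, every occurrence of $q$ is preceded by the same string of length $|y|-m$. So $|y|-m$ is valid, and $|y|\le m+d^*$. Hence $x$ is the longest extension of $q$ to the left that preserves the occurrence count.

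The left closure is defined, symmetrically to the right closure, as the shortest left-maximal such extension. So I still need to show that $x$ is left-maximal and that no shorter extension is left-maximal. \emph{This step is the main subtlety.}
\begin{itemize}
\item \emph{No shorter extension is left-maximal.} Any $d<d^*$ admits a unique preceding character, because $P(d+1)$ holds. So the corresponding extension is not left-maximal.
\item \emph{$x$ itself is left-maximal.} Since $P(d^*+1)$ fails, either the occurrences of $x$ are preceded by at least two distinct characters, or some occurrence of $x$ starts at position $1$. The second case cannot arise in $T=\$S\#$ when $x$ occurs at least twice: $\$$ occurs only at position $1$, so if $x$ started at position $1$ it would begin with $\$$ and occur only once. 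But $\occ_T(x)=R-L+1\ge 2$, since $L<R$.
\end{itemize}
Therefore $|\LChar_T(x)|\ge 2$. So $x$ is the shortest left-maximal extension, i.e., $x=\leftcl{q}$, with string depth $m+d^*$ and interval $[\alpha_{d^*},\beta_{d^*}]$.
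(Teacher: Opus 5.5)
Your proof is correct and follows the same route as the paper: use Lemma~\ref{lem:valid-left-growth} to turn validity into the existence of a common length-$d$ left extension of all occurrences of $q$, note that this property is downward closed, and identify the maximum $d^*$ with the left closure. You are more careful than the paper, which only asserts that the largest occurrence-preserving left extension is ``the definition of the left-closure''. Your boundedness remark (so $d^*$ exists), your use of the length-$d'$ suffix rather than prefix of the extension, and your $\$$-sentinel argument (showing $x=uq$ is genuinely left-maximal, so it matches the ``shortest left-maximal'' definition) supply exactly the steps the paper leaves implicit.
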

\begin{proof}
If all occurrences of $q$ share a common left extension of length $d$, then they also share every shorter prefix of that extension.  Hence validity is monotone.  The maximum valid $d^*$ is exactly the largest common left extension that preserves the occurrence set of $q$, which is the definition of the left-closure.
\qed
\end{proof}

By Corollary~\ref{cor:left-closure-search}, $d^*$ can be found by exponential search followed by binary search.  Since $d^*\leq n$, this requires $O(\log n)$ evaluations of $\mathsf{Valid}$. This is the common left-closure search performed after the right closure $q=\rightcl{aw}$ has been computed.  For the edge representation added for the extension $aw$ when its target is a maximal-repeat node, the algorithm stores the two lengths $m=|\rightcl{aw}|$ and $d^*$.  These values are later used, after reversing the orientation, to compute the offset and length values of the corresponding CDAWG edge label.

\begin{lemma}\label{lem:closure}
For every maximal repeat $w\in\MRep(T)$ and every left-extension character $a\in\LChar_T(w)$ such that $aw$ has at least two occurrences, the interval operations above compute the suffix-array interval and string depth of $\bothcl{aw}$.
\end{lemma}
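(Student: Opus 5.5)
The plan is to verify the three stages of the interval computation in order: the LF step, the right-closure step, and the left-closure step. Each stage should produce exactly the suffix-array interval and string depth of the intermediate string, namely $aw$, then $q=\rightcl{aw}$, then $\leftcl{q}=\bothcl{aw}$.

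\emph{Stage 1.} The BWT range-distinct query on $\range(w)=[L_w,R_w]$ returns, for the character $a\in\LChar_T(w)$, a subrange $[p_a,q_a]$ whose BWT entries equal $a$. Because LF-mapping is order preserving on positions carrying the same BWT character, $[\mathsf{LF}(p_a),\mathsf{LF}(q_a)]$ is exactly $\range(aw)$. The depth is $|w|+1$, with $|w|$ known from $\mathsf{key}(w)$. By hypothesis $aw$ occurs at least twice, so $L<R$.

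\emph{Stage 2.} Let $\ell=\LCE(\SA[L],\SA[R])$. By the standard property that the longest common prefix of a contiguous block of sorted suffixes equals that of its two extreme suffixes, every suffix in $[L,R]$ shares a prefix $y$ of length $\ell$. Moreover, since $L<R$, there are two suffixes in $[L,R]$ that diverge right after $y$. Any string $y'$ with prefix $aw$ and $|aw|\le|y'|<\ell$ is a prefix of $y$ and has interval $[L,R]$; it is followed only by $y[|y'|+1]$ and is therefore not right-maximal. Conversely, $y$ itself has interval $[L,R]$, because any suffix outside the interval with prefix $y$ would have prefix $aw$. Also $y$ is right-maximal, because of the divergence at position $\ell+1$. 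One corner case needs care: if $T[\SA[R]..n]$ is exactly $y$ (a suffix ending the text), then, since $T$ ends with the unique $\#$, this cannot happen for two distinct suffixes, and divergence still holds. Hence $q=y$, $m=\ell\ge|aw|$, and $\range(q)=[L,R]$. Here we use that occurrence count equals interval size, so "$\occ$ preserved" means "same interval".

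\emph{Stage 3.} Apply Corollary~\ref{cor:left-closure-search} to $q$ with $\range(q)=[L,R]$. This gives $|\leftcl{q}|=m+d^*$ and $\range(\leftcl{q})=[\alpha_{d^*},\beta_{d^*}]$. What remains is to argue that the exponential-then-binary search returns exactly $d^*$. This follows from monotonicity: $\mathsf{Valid}$ holds on $\{0,\dots,d^*\}$ and fails beyond, and invalid $d$ (those with $i_d<1$ or $j_d<1$) are treated as false, consistent with monotonicity. The search also terminates because $d^*<n$, since the unique $\$$ at position 1 cannot be preceded by anything.

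\emph{Identifying the closure.} Finally, $\leftcl{\rightcl{aw}}=\bothcl{aw}$ by definition, so the output is the interval and depth of $\bothcl{aw}$. The main point needing care is Stage 2: showing that the right closure's interval is unchanged and that its length is exactly the LCE of the extremal suffixes, and not merely a lower bound. I would settle this via the divergence argument above, relying on the sentinel $\#$ to rule out the prefix-of-suffix degeneracy.
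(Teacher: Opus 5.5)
Your proposal is correct and follows the same route as the paper's proof: LF-mapping the range-distinct endpoints gives $\range(aw)$, the LCE of the two extreme suffixes gives $m=|\rightcl{aw}|$ with $\range(q)=[L,R]$, and Corollary~\ref{cor:left-closure-search} gives the depth $m+d^*$ and interval $[\alpha_{d^*},\beta_{d^*}]$ of $\leftcl{q}=\bothcl{aw}$. You make explicit what the paper leaves implicit: that the LCE is exact because of the unique terminal $\#$, and that the exponential-then-binary search returns $d^*$ by monotonicity. One phrase is loose but harmless: $[p_a,q_a]$ is delimited by the first and last $a$ in the BWT range rather than consisting entirely of $a$'s, and your order-preservation argument already handles this correctly.
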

\begin{proof}
The BWT range-distinct query enumerates exactly the subrange of occurrences of $w$ preceded by $a$, and LF-mapping transforms this subrange into $\range(aw)$.  Since $aw$ has at least two occurrences, this interval is non-singleton.  The common prefix length of all suffixes in this interval is the LCE of the two extreme suffixes, and hence the algorithm obtains $|\rightcl{aw}|$ correctly.  Corollary~\ref{cor:left-closure-search} then gives the maximum left growth preserving the same occurrence set, and therefore the suffix-array interval and string depth of $\leftcl{\rightcl{aw}}=\bothcl{aw}$.
\qed
\end{proof}

% DELETED: the subsection ``Skipping unary Weiner-link paths'' repeated the same operation as the left-closure search of Corollary~\ref{cor:left-closure-search}.
% DELETED: the algorithm does not need a separate hard/soft path-skipping case; it computes $q=\rightcl{aw}$ and then $\leftcl{q}$ by the same rule.
\subsection{Cost of one extension}

Each evaluation of $\mathsf{Valid}$ uses a constant number of accesses to $\SA$, $\ISA$, and $\LCE$.  Together with LF-mapping and range-distinct enumeration, the right- and left-closure computations for one left extension of a maximal repeat take $O(\log n)$ primitive operations, and hence $O(\log n\cdot\tGNP(T))$ time under the compressed suffix-tree interface for the working string $T$.

\begin{lemma}\label{lem:extension-cost}
Over all maximal repeats processed by Algorithm~\ref{alg:framework}, the interval computations for their left extensions take $O(\eL(T)\log n\cdot\tGNP(T))$ time and use $O(\eL(T))$ additional words for the added edge representations.
\end{lemma}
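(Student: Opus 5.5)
The bound follows by charging all work to the left-extension triples $(a,p_a,q_a)$ enumerated in Algorithm~\ref{alg:framework} and showing that each triple costs $O(\log n)$ primitive operations and $O(1)$ words of output.

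First, the number of triples. The outer loop runs once per discovered maximal repeat $w$. We have to argue that each $w\in\MRep(T)$ is processed at most once; the set of discovered maximal repeats is keyed by $\mathsf{key}(w)$, and this key is unique by the remark in Section~3. For each such $w$, the BWT range-distinct query on $[L_w,R_w]$ returns exactly one triple per character of $\LChar_T(w)$. Hence the total number of triples is $\sum_{w\in\MRep(T)}|\LChar_T(w)|=\eL(T)$. The node for $\eps$ contributes only $|\LChar_T(\eps)|$ extra triples. This is at most $\eL(T)+O(1)$ under the convention that $\eps$ is counted among the maximal repeats of $T=\$S\#$, which holds whenever $\eps$ has two distinct left and right characters. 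Otherwise it is absorbed as an additive $O(\sigma)$ term bounded by the number of distinct characters, each of which is a left extension of $\eps$.

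Second, the cost per triple. The range-distinct query charges $O(1)$ primitive calls per reported character, i.e.\ $O(\tGNP(T))$ time per triple in the output-sensitive form of the GNP interface. LF-mapping costs two primitive calls. The right closure $m=\LCE(\SA[L],\SA[R])$ costs $O(1)$ calls. By Corollary~\ref{cor:left-closure-search}, $d^*$ is found by exponential search followed by binary search, using $O(\log n)$ evaluations of $\mathsf{Valid}$; each evaluation uses $O(1)$ accesses to $\SA$, $\ISA$, and $\LCE$. Lemma~\ref{lem:closure} certifies that these computations produce the correct interval and depth of $\bothcl{aw}$, so the key $\mathsf{key}(h)$ is formed in $O(1)$ further time. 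Summing gives $O(\log n\cdot\tGNP(T))$ time per triple, hence $O(\eL(T)\log n\cdot\tGNP(T))$ in total. The singleton case is cheaper still.

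Third, space. Each triple adds exactly one edge representation, either $(\mathsf{key}(w),\mathsf{key}(h),a,m,d)$ or a sink edge with $O(1)$ stored fields. Each representation therefore takes $O(1)$ words, for a total of $O(\eL(T))$ words. The search itself uses $O(1)$ temporary words that are reused across triples.

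The main obstacle is the accounting for discovery. We must make sure the traversal never processes a string that is not a maximal repeat, and never processes one twice. My plan is to insert each newly discovered target $\mathsf{key}(h)$ into a dictionary (or defer by sorting) and enqueue $h$ only if it is new. Each such $h=\bothcl{aw}$ is a maximal repeat when $aw$ occurs at least twice, so only maximal repeats are enqueued. The per-check cost is $O(1)$ expected with hashing, or $O(\log n)$ with a balanced tree, which is within the $O(\log n)$ per-triple budget. The dictionary holds at most $|\MRep(T)|\le\eL(T)/2$ entries.
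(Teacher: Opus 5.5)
Your proof is correct and follows the same route as the paper. Both charge all work to the $\eL(T)$ left-extension triples, bound each triple by $O(\log n)$ primitive calls via the exponential-then-binary search of Corollary~\ref{cor:left-closure-search}, and store $O(1)$ words per edge representation. Your added points (deduplicating discovered maximal repeats by key within the per-triple budget, and accounting for the node $\eps$, which the sentinels already make a maximal repeat when $S$ is nonempty) make explicit what the paper's short proof leaves implicit.
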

\begin{proof}
The number of added edge representations is $O(\eL(T))$ by the definition of $\eL(T)$.  For each representation whose target is a maximal-repeat node, Lemma~\ref{lem:closure} and the preceding paragraph show that the right and left closures are computed using $O(\log n)$ primitive operations.  Each primitive costs $\tGNP(T)$ time, and each such edge representation stores a constant number of keys and length values.
\qed
\end{proof}

\section{From Weiner Links to CDAWG Edges}\label{sec:edges}

This section interprets the edge representations computed in Section~\ref{sec:traversal} as CDAWG edges.  We use two classical characterizations: CDAWGs are obtained by compacting DAWGs~\cite{BlumerEtAl1985,BlumerEtAl1987}, and equivalently by merging suffix-tree loci whose rooted subtrees are isomorphic~\cite{NarisawaEtAl2017}.  Our construction uses these characterizations in opposite orientations.

Let the forward working string be $T$.  We traverse Weiner links~\cite{Weiner1973} in the suffix tree of $T$.  Under reversal, a left extension $aw$ of $w$ becomes a right extension $w^Ra$ of $w^R$.  Hence a Weiner link on the suffix-tree side of $T$ corresponds to an edge of the DAWG of $T^R$.

\begin{lemma}\label{lem:weiner-dawg}
Let $w$ be a suffix-tree locus of $T$, and suppose that the Weiner link by a character $a$ leads to the locus of $aw$.  Then, in the reversed string $T^R$, this link corresponds to the DAWG edge labeled $a$ from the node containing $w^R$ to the node containing $w^Ra$.  Moreover, hard Weiner links correspond to primary DAWG edges of $T^R$, and soft Weiner links correspond to secondary DAWG edges of $T^R$.
\end{lemma}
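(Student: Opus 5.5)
The plan is to reduce everything to the standard description of DAWG nodes as end-position equivalence classes and of suffix-tree loci as start-position classes. Then I will check how the two sides line up under reversal.

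\emph{Step 1: nodes under reversal.} A DAWG node of $T^R$ is an equivalence class of strings with the same set of end positions in $T^R$. Its longest member is the left-closure in $T^R$ of any member. Reversal sends end positions in $T^R$ to start positions in $T$. So $x$ and $y$ lie in the same DAWG node of $T^R$ iff $x^R$ and $y^R$ have the same start-position set in $T$, that is, iff $x^R$ and $y^R$ have the same suffix-array interval $\range(\cdot)$ in $T$. These are exactly the strings on the same suffix-tree edge, or locus, up to the same right-closure. Hence the node containing $w^R$ corresponds to the suffix-tree locus of $w$, with canonical representative $\rightcl{w}$. Likewise, the node containing $w^Ra$ corresponds to the locus of $aw$, with representative $\rightcl{aw}$.

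\emph{Step 2: the edge.} The DAWG transition from the node of $w^R$ by $a$ reads the character $a$ after $w^R$. Since the DAWG of $T^R$ is deterministic and its transitions are well defined on classes, the target is the class of $w^Ra$. This class exists because $aw\in\Substr(T)$, since the Weiner link is defined. Reversing back, this target is the locus of $aw$, so the Weiner link by $a$ from the locus of $w$ is this DAWG edge with label $a$.

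\emph{Step 3: primary versus secondary.} Following Blumer et al., an edge from node $u$ labeled $a$ is primary iff the longest string of the target equals the longest string of $u$ followed by $a$; otherwise it is secondary. Take $w$ to be the longest member of its class, i.e.\ the node string $\rightcl{w}$. On the $T$ side, the longest member of the target class is $\rightcl{aw}$.
\begin{itemize}
\item The link is hard iff $aw$ is itself a branching locus, i.e.\ $\rightcl{aw}=aw$.
\item It is soft iff the locus of $aw$ lies strictly inside an edge, i.e.\ $|\rightcl{aw}|>|aw|$.
\end{itemize}
Reversing, $\rightcl{aw}=aw$ says that the longest member of the target DAWG node is $w^Ra$, which is the primary condition. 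The strict inequality says that the longest member is strictly longer, which is the secondary condition.

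The main obstacle I expect is the careful identification in Step 1: matching ``locus'' on the suffix-tree side, including implicit loci of soft-link targets and the convention that Weiner links are taken from explicit nodes, with ``longest string of a DAWG node''. This has to be done so that Step 3's length comparison is exactly the hard/soft distinction. I would handle this by explicitly fixing $w$ to be an explicit (right-maximal) node, as in the algorithm where $w\in\MRep(T)$, and stating the correspondence via $\range$.
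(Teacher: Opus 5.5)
Your proposal is correct and follows the same route as the paper's proof: reversal turns prepending $a$ in $T$ into appending $a$ in $T^R$, and hard/soft is matched with primary/secondary according to whether $aw$ is an explicit locus. You spell out what the paper only asserts by citing Blumer et al.\ (the end-position/start-position correspondence and the criterion $|\rightcl{aw}|=|aw|$ for a primary edge), and your restriction to explicit $w$ is genuinely needed, since for an implicit $w$ the primary/secondary status is decided by whether $a\rightcl{w}$, not $aw$, is explicit.
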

\begin{proof}
The first statement follows from reversal: an occurrence of $aw$ in $T$ is the reversal of an occurrence of $w^Ra$ in $T^R$, so prepending $a$ in $T$ is appending $a$ in $T^R$.  The hard/soft distinction is exactly the distinction between edges whose destinations are represented by explicit nodes and those whose destinations lie inside compacted paths, i.e., the primary/secondary distinction of Blumer et al.~\cite{BlumerEtAl1985,BlumerEtAl1987}.
\qed
\end{proof}

The compaction rule is decided on the forward suffix-tree side.  For a maximal repeat $w$ and a left-extension $aw$, Section~\ref{sec:traversal} computes
\[
  q=\rightcl{aw}, \qquad h=\leftcl{q}=\bothcl{aw}.
\]
The right closure $q$ is obtained from the Weiner-link destination $aw$ by extending it to the right while preserving its occurrence set.  Then $h=\leftcl{q}$ gives the maximal-repeat node used as the non-sink target.

\begin{figure}[tbh]
\centering
\includegraphics[width=1.0\textwidth]{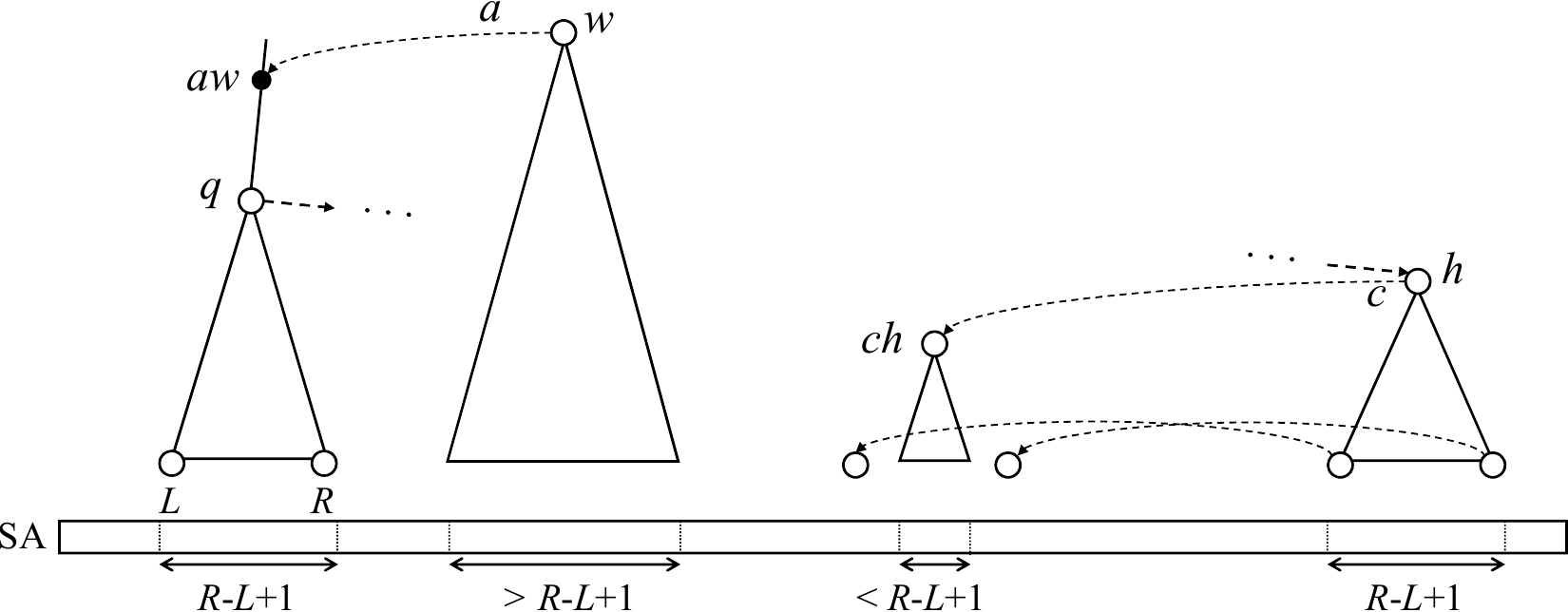}
\caption{Illustration of the interval computation for a soft Weiner link from a maximal repeat $w$.  The link by $a$ first reaches the locus of $aw$, and its right closure is $q=\rightcl{aw}$ with $\range(aw)=\range(q)=[L,R]$.  The maximal-repeat node $h=\leftcl{q}=\bothcl{aw}$ is found by the left-closure search.  Along the valid part of this search, the SA-interval width remains $R-L+1$; after the maximal left growth, the next candidate left extension $ch$, where $c$ is determined by shifting the boundary leaves one more position to the left, no longer preserves this width.  This failure is detected by the validity test on the two leaves obtained by shifting the leftmost and rightmost leaves of $[L,R]$.}
\end{figure}

\begin{lemma}\label{lem:canonical-target}
Let $w\in\MRep(T)$ and $a\in\LChar_T(w)$, and assume that $aw$ has at least two occurrences.  Let $q=\rightcl{aw}$ and $h=\bothcl{aw}$.  Then the node for $h$ is the maximal-repeat node reached from the Weiner-link destination $aw$ after merging suffix-tree loci with isomorphic rooted subtrees.  Consequently, two added edge representations whose targets are maximal-repeat nodes have the same final target node iff their canonical keys $\mathsf{key}(h)$ are equal.
\end{lemma}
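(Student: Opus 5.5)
The plan is to prove the lemma in two steps: first identify which CDAWG node the Weiner-link destination $aw$ falls into, and then show that $\mathsf{key}(h)$ is a faithful name for that node.

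\textbf{Step 1: the node reached is the one for $h$.} I will use the suffix-tree characterization of Blumer et al.\ and Narisawa et al.\ directly. The key observation is that two loci of the suffix tree of $T$ (explicit or implicit) have isomorphic rooted subtrees exactly when the loci share the same set of ending positions. Concretely, let $x$ be the locus and $xy$ a deeper locus. If $x$ is implicit, i.e.\ not right-maximal, then $x$ and $\rightcl{x}$ have the same suffix-array interval. The subtree of $x$ is then the subtree of $\rightcl{x}$ with a unary prefix attached, so after compaction $x$ lies on the incoming edge of the node for $\rightcl{x}$. This gives $aw \mapsto q$. Next, by Corollary~\ref{cor:left-closure-search}, $q$ and $h=\leftcl{q}$ have the same occurrence count, and every occurrence of $q$ is preceded by the same string $u$ with $h=uq$. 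Hence the suffixes in $\range(h)$ are exactly those in $\range(q)$ prefixed by $u$, with their order preserved (the second half of Lemma~\ref{lem:valid-left-growth}). It follows that the rooted subtrees at $q$ and at $h$ have the same branching structure and the same edge labels: the branching characters after $q$ and after $uq$ coincide occurrence by occurrence, and the LCPs shift by $|u|$ uniformly. So $q$ and $h$ are merged. Since $aw$ has at least two occurrences, $h$ is right-maximal (because $q$ is) and left-maximal by maximality of $d^*$. The left-maximality uses $\$$ as a sentinel so that the extension is blocked by a character difference or by the text start. Therefore $h\in\MRep(T)$ and $h$ is the representative node.

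\textbf{Step 2: the keys identify nodes.} For the ``if'' direction, equal keys give equal interval and depth, hence the same string, hence the same node. For ``only if'', suppose two targets $h_1=\bothcl{a_1w_1}$ and $h_2=\bothcl{a_2w_2}$ end at the same CDAWG node. Each $h_i$ is a maximal repeat, and each maximal-repeat node of the CDAWG contains exactly one maximal repeat: its longest string $\str(v)$, all other members being proper suffixes obtained by the merging above. Hence $h_1=h_2$ and the keys coincide. Uniqueness of $\mathsf{key}$ for maximal repeats was noted earlier.

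\textbf{Main obstacle.} I expect the main difficulty to be the step asserting that an equal occurrence count between $q$ and $uq$ forces isomorphic subtrees, including equality of edge labels. My first approach is the order-preservation argument from Lemma~\ref{lem:valid-left-growth}. For every pair of suffixes in $\range(q)$, the LCP of the $u$-prefixed suffixes equals $|u|$ plus the original LCP. Consequently every internal node below $q$ at depth $|q|+k$ corresponds to one below $h$ at depth $|h|+k$ with the same child characters. Leaf edges are labeled by suffix tails, and these coincide since the text positions are the same. The point needing care is leaves ending at the terminal $\#$, and the proof will need to check these explicitly.
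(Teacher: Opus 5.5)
Your proposal is correct and follows essentially the same route as the paper. Both arguments run through the unary path from $aw$ to $q=\rightcl{aw}$ (same occurrence set), then the label-isomorphism of the subtrees below $q$ and $h=\leftcl{q}$, then uniqueness of keys for maximal repeats. For the isomorphism, the paper uses a one-line argument: the remaining suffixes $T[p+|q|..]$ and $T[(p-d)+|h|..]$ coincide. That is exactly your own ``same ending positions'' observation, and it makes the LCP-shift argument and the separate check of $\#$-terminated leaves unnecessary. Your explicit ``only if'' direction, that each maximal-repeat node contains exactly one maximal repeat, spells out a step the paper leaves implicit.
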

\begin{proof}
The string $q=\rightcl{aw}$ is obtained by extending $aw$ to the right as long as all occurrences have the same next character.  Thus the path from the locus of $aw$ to the locus of $q$ is unary and preserves the starting positions of the occurrences.  Hence this part is precisely the non-branching part that is contracted when passing from the DAWG edge to the compact CDAWG edge.

Let $h=\leftcl{q}$, and write $h=zq$ with $d=|z|$.  By the definition of left closure, every occurrence of $q$ starts at some position $p$ iff there is an occurrence of $h$ starting at $p-d$.  Moreover, after the loci $q$ and $h$ are consumed, the remaining suffixes are identical:
\[
  T[p+|q|..]=T[(p-d)+|h|..].
\]
Therefore the rooted suffix-tree subtree below the locus of $q$ is label-isomorphic to the rooted suffix-tree subtree below the locus of $h$.  Since $q$ is right-maximal and the left closure is maximal, $h$ is both left- and right-maximal, and hence $h\in\MRep(T)$.

By the suffix-tree characterization of CDAWGs~\cite{NarisawaEtAl2017}, the loci reached from $aw$ through the unary path to $q$ and then through the above subtree-isomorphic shift to $h$ belong to the same CDAWG node.  Our construction uses the maximal-repeat locus $h$ for this CDAWG node.  Maximal repeats have unique suffix-array intervals and string depths, so equality of canonical keys is equivalent to equality of the strings $h$, and hence to equality of final target nodes.
\qed
\end{proof}

Thus hard and soft Weiner links correspond to primary and secondary DAWG edges, respectively, by Lemma~\ref{lem:weiner-dawg}.  For CDAWG construction, the algorithm computes $h=\bothcl{aw}$ for every non-unique left extension and uses $\mathsf{key}(h)$ as the target key, while unique extensions go to the sink.  The algorithm does not explicitly build either the suffix tree or the DAWG.

\subsection{Merging edge representations into CDAWG edges}

After the algorithm has added all edge representations, it sorts all canonical keys that occur as sources or non-sink targets.  Each distinct key receives one node identifier; edges whose target is the sink keep the sink as their target.  Finally, for $T=\$S\#$, reversing the orientation gives the corresponding topology for $T^R=\#S^R\$$.

\begin{lemma}\label{lem:edge-representations}
Every edge representation added by the algorithm corresponds to a left extension of a maximal repeat of $T$, and every left-oriented CDAWG edge arising from such a left extension is represented by some added edge representation.
\end{lemma}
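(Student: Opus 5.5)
The lemma has two directions, soundness and completeness, and I would prove both by tracking the enumeration in Algorithm~\ref{alg:framework} and using the correspondence between left extensions of $T$ and right extensions of $T^R$. Throughout, I would use the identities $\MRep(T^R)=\{w^R\mid w\in\MRep(T)\}$ and $\LChar_T(w)=\RChar_{T^R}(w^R)$.

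\emph{Soundness.} Every edge representation is created inside the inner loop, for a pair $(w,(a,p_a,q_a))$. Here $w$ is a discovered maximal repeat and $a$ is returned by the BWT range-distinct query on $\range(w)$. First I would argue that every discovered $w$ really belongs to $\MRep(T)$. The initial string $\eps$ is a maximal repeat of $T=\$S\#$ (or the root handled as such). Every later discovery is a string $h=\bothcl{aw}$, and Lemma~\ref{lem:closure} together with Lemma~\ref{lem:canonical-target} shows that $h$ is a maximal repeat. An induction on the order of discovery then gives the claim. Second, as stated in the enumeration subsection, the range-distinct query returns exactly the characters of $\LChar_T(w)$, each exactly once. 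Hence each representation corresponds to exactly one pair $(w,a)$ with $a\in\LChar_T(w)$.

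\emph{Completeness.} A left-oriented CDAWG edge leaving the node of $w$ corresponds, via the reversal, to an out-edge of $w^R$ in $\mathrm{CDAWG}(T^R)$. Such edges are in bijection with $\RChar_{T^R}(w^R)=\LChar_T(w)$, by the Preliminaries statement that out-edges correspond to right-extensions. The edge for $a$ goes to the sink if $aw$ is unique, and otherwise to the node of $\bothcl{aw}$, by Lemma~\ref{lem:canonical-target}. The algorithm handles both cases in its two branches. It therefore suffices to show that every $w\in\MRep(T)$ is eventually discovered.

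\emph{Main obstacle: reachability.} I would prove that every $w\in\MRep(T)$ is discovered by induction on $|w|$. For $w\neq\eps$, write $w=aw'$. Then $w'$ is right-maximal, because any right context of $w$ is also a right context of $w'$. Let $u$ be the longest suffix of $w'$, possibly $\eps$, that is a maximal repeat; such a suffix exists because $\eps$ qualifies. Then $w=yu$ for some nonempty $y$, and I would take $b$ to be the last character of $y$. I claim that $\bothcl{bu}=w$. Every suffix of $w$ of length greater than $|u|$ is right-maximal, since $w$ is right-maximal. By the choice of $u$, every such suffix other than $w$ itself is not left-maximal, so each has a unique left character. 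It follows that the occurrences of $bu$ coincide with those of $w$ shifted by $|w|-|bu|$. Hence $\rightcl{bu}=bu$, because $bu$ is right-maximal, and its left closure is exactly $w$, because $w$ is left-maximal. By the induction hypothesis $u$ was discovered, and processing $(u,b)$ discovers $w$.

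The delicate point is making this occurrence-set argument rigorous at the text boundaries, where a suffix with an occurrence at position $1$ has no left character. The sentinels $\$$ and $\#$ in $T$ handle this: position $1$ holds $\$$, which occurs only once, so every occurrence of a repeat that has no left character would have to start with $\$$, and this edge case can be dismissed.
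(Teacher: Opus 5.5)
Your proof is correct, and its skeleton is the paper's. Soundness comes from the range-distinct query on $\range(w)$ returning exactly $\LChar_T(w)$. Completeness uses the same two-branch case split: unique extensions go to the sink, the others to $\mathsf{key}(\bothcl{aw})$.

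Where you go further is the converse. The paper ends it with ``comes from such a pair and is therefore enumerated'', which silently presumes two things: that every $w\in\MRep(T)$ is actually discovered by the traversal, and that everything discovered is a maximal repeat. You prove both.
\begin{itemize}
\item The discovery-order induction uses $\bothcl{aw}\in\MRep(T)$.
\item More substantially, the induction on $|w|$ goes through the longest maximal-repeat proper suffix $u$ of $w$ and the character $b$ preceding it. Your checks here are sound: the intermediate suffixes are right-maximal but not left-maximal, and the unique $\$$ rules out occurrences at position $1$. Hence the occurrences of $bu$ and $w$ correspond, $\rightcl{bu}=bu$, and $\leftcl{bu}=w$.
\end{itemize}

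In terms of $\mathrm{CDAWG}(T^R)$, your argument exhibits the last edge on the path spelling $w^R$ from the source, namely the edge out of $u^R$ beginning with $b$. A shorter route would combine the out-edge bijection you already use with the standard fact that every CDAWG node is reachable from the source. Your version avoids importing that fact and is self-contained.

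The paper's proof is shorter but leaves reachability implicit; yours is the one that actually closes the completeness direction.
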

\begin{proof}
The algorithm processes each discovered maximal repeat $w$, including $\eps$.  The BWT range-distinct query on $\range(w)$ returns exactly the characters in $\LChar_T(w)$ with their occurrence subranges.  Hence every added edge representation is generated from some pair $(w,a)$.  If $\range(aw)$ is a singleton, its target is the sink; otherwise Lemma~\ref{lem:closure} computes $h=\bothcl{aw}$ and the algorithm uses $\mathsf{key}(h)$ as the target key.  Conversely, every left-oriented CDAWG edge arising from a left extension of a maximal repeat comes from such a pair and is therefore enumerated.
\qed
\end{proof}

\begin{lemma}\label{lem:merge}
After sorting and merging equal canonical keys, the added edge representations form the left-oriented CDAWG topology of $T$.
\end{lemma}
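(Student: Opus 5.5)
We argue by matching the merged graph with the left-oriented CDAWG topology of $T$: nodes first, then edges, and finally we show that no spurious or duplicate edges survive. By the left-oriented topology we mean the graph whose non-sink nodes are the maximal repeats of $T$ (together with $\eps$). The outgoing edges of a node $w$ correspond one-to-one to the characters $a\in\LChar_T(w)$. The edge for $a$ leads to the node of $\bothcl{aw}$ when $aw$ occurs at least twice, and to the sink otherwise. This is the reversal of the standard description of $\mathrm{CDAWG}(T^R)$, where the edges out of $w^R$ correspond to $\RChar_{T^R}(w^R)$, and it uses $\MRep(T^R)=\{w^R\}$. Our first step is to state this characterization explicitly via Lemma~\ref{lem:weiner-dawg} and the compaction of the DAWG of $T^R$. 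A DAWG edge of $T^R$ from $w^R$ by $a$ followed by the maximal non-branching path is exactly the passage from $aw$ to $\rightcl{aw}$ and then to $\bothcl{aw}$, as in the proof of Lemma~\ref{lem:canonical-target}.

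Next comes the node correspondence. The distinct canonical keys after sorting are in bijection with the maximal repeats discovered. Every such key is either a source $\mathsf{key}(w)$ with $w\in\MRep(T)$, or a target $\mathsf{key}(h)$ with $h=\bothcl{aw}\in\MRep(T)$, which holds by Lemma~\ref{lem:canonical-target}. Because maximal repeats have unique interval–depth keys, equal keys mean equal strings, so merging identifies precisely the loci that coincide as CDAWG nodes. Conversely, every maximal repeat must be discovered. To show this, we induct on length from $\eps$. Take any $h\in\MRep(T)$ with $h\neq\eps$ and write $h=aw'$. Let $w$ be the longest maximal repeat that is a suffix of $h$ and is obtained as $\bothcl{w'}$. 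Then $\bothcl{a w}=h$, so $h$ is reached from the shorter maximal repeat $w$. An alternative route is to use the known fact that the CDAWG is connected from the source, so every node is the target of some edge from a shorter node.

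Then we handle the edges. By Lemma~\ref{lem:edge-representations}, the added edge representations are in bijection with the pairs $(w,a)$ where $w$ is a discovered maximal repeat and $a\in\LChar_T(w)$. Lemma~\ref{lem:closure} together with Lemma~\ref{lem:canonical-target} shows that each representation's target is the correct merged node, or the sink in the singleton case. Distinct characters $a$ at the same source $w$ give distinct edges, so there are no duplicates. Each pair also appears exactly once, because the range-distinct query returns each character once. Hence the edge multiset equals that of the left-oriented CDAWG, which has $\eL(T)$ edges.

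The main obstacle is the completeness of discovery, namely that every maximal repeat becomes a source. This amounts to the claim that every maximal repeat $h\neq\eps$ equals $\bothcl{aw}$ for some shorter $w\in\MRep(T)$ and some $a$. We would prove it by taking $w$ to be the longest proper suffix of $h$ that is a maximal repeat. This $w$ exists since $\eps$ qualifies. We set $a$ to be the character preceding $w$ in $h$, and then verify that $\bothcl{aw}=h$. The argument is that every maximal repeat strictly between $aw$ and $h$ in the suffix order would contradict the choice of $w$. This mirrors the standard argument that CDAWG nodes are reachable from the source.
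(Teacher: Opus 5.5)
Your proposal is correct, and its core is the paper's argument. Like the paper, you use Lemma~\ref{lem:edge-representations} for the edge set and Lemmas~\ref{lem:closure} and~\ref{lem:canonical-target} to show that each non-sink target is $\bothcl{aw}$. You also use the uniqueness of interval--depth keys of maximal repeats, so that merging equal keys identifies exactly the CDAWG nodes.

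The real difference is that you prove completeness of discovery, which the paper leaves implicit. The paper's proof, and the converse direction of Lemma~\ref{lem:edge-representations}, take for granted that every maximal repeat of $T$ eventually enters the loop of Algorithm~\ref{alg:framework}. Your induction on length supplies this: take $w$ to be the longest proper suffix of $h$ in $\MRep(T)$ and $a$ the character of $h$ preceding it. Without this step the converse direction is unsupported, so your version is more complete; the paper's is shorter only because it skips it.

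Two repairs are needed.
\begin{itemize}
\item Delete the first attempt in your node-correspondence paragraph. With $h=aw'$, if $w'$ is not left-maximal then $\bothcl{w'}=h$ itself, so there is no proper suffix "obtained as $\bothcl{w'}$". Use only the argument in your final paragraph.
\item Spell out that final verification. The string $aw$ is a suffix of the right-maximal $h$, so it is right-maximal and $\rightcl{aw}=aw$. Every suffix $x$ of $h$ with $|aw|\le|x|<|h|$ is right-maximal for the same reason, so by the choice of $w$ it is not left-maximal. Since $h$ occurs at least twice, it cannot contain the unique $\$$, so no occurrence of $x$ starts at position $1$ of $T$. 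Hence all occurrences of $x$ are preceded by the same character, namely the one preceding $x$ in $h$.
\end{itemize}
Chaining this over all such $x$ gives $\occ(aw)=\occ(h)\ge 2$ and $\leftcl{aw}=h$. So $(w,a)$ yields a non-sink edge representation whose target is $h$, and the induction goes through.
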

\begin{proof}
By Lemma~\ref{lem:edge-representations}, each edge representation is added for a Weiner-link step from a maximal repeat.  Unique extensions target the sink.  For every other target, Lemma~\ref{lem:canonical-target} shows that the algorithm computes the maximal-repeat node $h=\bothcl{aw}$ reached after merging suffix-tree loci with isomorphic rooted subtrees.  Hard and soft Weiner links correspond to primary and secondary DAWG edges, respectively, but both cases use this same target-key computation.  By the suffix-tree characterization~\cite{NarisawaEtAl2017}, these maximal-repeat nodes are exactly the CDAWG nodes.  Since maximal repeats have unique suffix-array intervals and depths, two non-sink targets represent the same CDAWG node iff their canonical keys are equal.  Thus sorting the canonical keys and merging equal keys identifies exactly the non-sink CDAWG nodes, while all edges whose target is the sink keep the sink as their target.
\qed
\end{proof}

\begin{lemma}\label{lem:marker}
Let $T=\$S\#$.  Reversing the left-oriented topology of $T$ and deleting the unique edge from the node for $\eps$ whose label begins with $\#$ yields $\mathrm{CDAWG}(S^R\$)$.
\end{lemma}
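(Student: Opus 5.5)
The proof has two parts. First, the reversed left-oriented topology of $T$ is exactly $\mathrm{CDAWG}(T^R)$. Second, the compact edge labels for $T^R=\#S^R\$$ and $S^R\$$ differ only in the single edge leaving the root whose label starts with $\#$.

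For the first part I would invoke Lemma~\ref{lem:merge}. After merging, the nodes are the keys of the maximal repeats of $T$, plus the sink. The edges leaving $\mathsf{key}(w)$ are in bijection with $\LChar_T(w)$. Since $\MRep(T^R)=\{w^R\mid w\in\MRep(T)\}$ and $\LChar_T(w)=\RChar_{T^R}(w^R)$, reversing each edge gives a node $w^R$ with one outgoing edge per character of $\RChar_{T^R}(w^R)$. This is exactly the out-edge structure of $\mathrm{CDAWG}(T^R)$, using the preliminaries' fact that the out-edges of a maximal-repeat node correspond to its right-extensions. For the targets, Lemma~\ref{lem:weiner-dawg} identifies the link $aw$ with the DAWG edge $w^R\to w^Ra$ of $T^R$. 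Following the non-branching path that compaction collapses leads to the node of $(\bothcl{aw})^R$, because the closure of a string in $T$ is the reversed closure in $T^R$. Unique extensions go to the sink. Lemma~\ref{lem:canonical-target} then shows that the reversed target is the correct CDAWG node.

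For the second part I would compare $\MRep(\#S^R\$)$ with $\MRep(S^R\$)$. Since $\#$ occurs once and only at the left end, it cannot lie inside any repeat. So a repeat of $T^R$ is a substring of $S^R\$$. Its right contexts are the same in both strings, because $\#$ is never a right context. Its left contexts change only when the repeat is a prefix of $S^R\$$, where the extra left context $\#$ appears. The only repeat that is a prefix of both strings and is left-maximal solely because of $\#$ is the empty string. Such a repeat must be a prefix of $S^R\$$, so it ends before $\$$ or equals the whole string, and the whole string is not a repeat. I expect this step to be the main obstacle: I must show that a nonempty prefix $x$ of $S^R$ that is left-maximal only thanks to $\#$ does not create a spurious node. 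The intended argument is that node identity in the CDAWG depends on the right-extension structure. I would argue that such an $x$, whose other occurrences all share one left character $c$, merges into the node of $cx$ in $S^R\$$ (its left closure). In $\#S^R\$$ it is a separate node, which seems to contradict the claim. On reflection, $\bothcl{}$ in $T$ handles this, because prefixes of $\$S$ are exactly the cases where $\$$ blocks growth. I would therefore check that in $T^R$ the problematic prefixes are precisely those whose first occurrence is preceded by $\#$, and argue that after deleting the $\#$-edge their in-edges still coincide with those of $\mathrm{CDAWG}(S^R\$)$, possibly appealing to the fact that CDAWG nodes of $S^R\$$ include prefix-closures.

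Finally, the root of $\mathrm{CDAWG}(T^R)$ has an edge labelled by the whole string $\#S^R\$$ going to the sink. This edge is the one reversed from the link $\#\cdot\eps$ of $T$, since $\#$ occurs once in $T$ and so its range is a singleton. It is unique, and every other edge and label is unchanged, so deleting it yields $\mathrm{CDAWG}(S^R\$)$.
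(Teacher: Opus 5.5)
\textbf{Verdict: genuine gap.} Your first part and your final paragraph follow the paper. The first part reads the merged left-oriented topology of $T$ as $\mathrm{CDAWG}(T^R)$ via Lemmas~\ref{lem:weiner-dawg}, \ref{lem:canonical-target} and~\ref{lem:merge}. The final paragraph notes that the link by $\#$ from $\eps$ has a singleton range, which gives the root edge labelled $\#S^R\$$ to the sink. The gap is the middle step, which you leave open, and two claims you make there are wrong.

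First, the claim that only the empty string is left-maximal solely because of $\#$ fails for the strings that matter, namely the prefixes of $S^R\$$. For $S^R=abab$ you have $\LChar_{\#abab\$}(ab)=\{\#,b\}$ but $\LChar_{abab\$}(ab)=\{b\}$. Second, such an $x$ does \emph{not} merge into the node of $cx$ in $\mathrm{CDAWG}(S^R\$)$. The occurrence of $x$ as a prefix is an end position of $x$ but not of $cx$, so $x$ is the longest member of its end-position class. Being right-maximal, it is then a node of the compacted DAWG. In the example, $ab$ is a node of $\mathrm{CDAWG}(abab\$)$, while $bab$ occurs only once. So there is no spurious node, and nothing needs reconciling via $\bothcl{\cdot}$ or in-edges.

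\textbf{Missing idea.} You need the node characterization for a string $X$ that does not begin with a unique symbol. The non-sink nodes of $\mathrm{CDAWG}(X)$ are the right-maximal strings that are left-maximal \emph{or prefixes of} $X$. The edge for $wb$ targets the longest string with the same end positions as $\rightcl{wb}$, and its leftward growth stops at a prefix occurrence.

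\textbf{Applying it with $X=S^R\$$.} A nonempty right-maximal $x$ occurs at least twice, so some occurrence is not at position~1 and $\LChar_X(x)\neq\emptyset$. Moreover, $\LChar_{\#X}(x)=\LChar_X(x)\cup\{\#\}$ if $x$ is a prefix of $X$, and $\LChar_{\#X}(x)=\LChar_X(x)$ otherwise. Right contexts are unchanged. Hence the nonempty strings in $\MRep(\#X)$ are exactly the nonempty non-sink nodes of $\mathrm{CDAWG}(X)$. In the same way, $\leftcl{\cdot}$ computed in $\#X$ is blocked by $\#$ exactly at prefix occurrences, so all non-sink targets and labels agree. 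Unique (sink) extensions are also unaffected.

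Only with this does it follow that the root's $\#$-edge is the sole difference. The paper's proof draws that conclusion directly from the fact that every substring containing $\#$ is a unique prefix, tacitly relying on the same characterization. Your closing remark about prefixes of $S^R\$$ points at the right fact but never states or uses it.
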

\begin{proof}
Reversal transforms left extensions in $T$ into DAWG edges of $T^R=\#S^R\$$.  Hence Lemmas~\ref{lem:weiner-dawg} and~\ref{lem:merge} give the CDAWG topology of $T^R$.  The marker $\#$ occurs only at the first position of $T^R$, so every substring containing $\#$ is a unique prefix and is not a maximal repeat.  Thus the only edge containing $\#$ is the unique edge from the node for $\eps$ spelling the prefix beginning with $\#$.  Removing it leaves the CDAWG of $S^R\$$.
\qed
\end{proof}

\subsection{Compact representation of edge labels}

It remains to associate compact edge-label representations with the edges obtained after merging equal canonical keys.  We compute the information needed for label extraction in the CDAWG-grammar framework of Belazzougui and Cunial~\cite{BelazzouguiCunial2017CPM,BelazzouguiCunial2017SPIRE}, without storing edge labels explicitly.

Consider a left-oriented edge representation whose target is a maximal-repeat node and that was added for the left-extension of a maximal repeat $w_\ell$ by a character $a$.  Let
$q_\ell=\rightcl{aw_\ell}$ where $|q_\ell|=m$, and let $d$ be the additional left growth used to obtain
$h_\ell=\leftcl{q_\ell}=\bothcl{aw_\ell}$.
After reversing orientation, the corresponding right-oriented edge has source $u=w_\ell^R$ and target $v=h_\ell^R$.  
If $h_\ell=zaw_\ell y$, then $v=y^R u a z^R$.
Therefore the edge-label representation of the reversed edge $\gamma$ stores
$\gamma.\leftpos=|y|=m-|w_\ell|-1$ and 
$\gamma.\rightpos=1+d$.

\begin{lemma}\label{lem:label-representation}
For every edge representation whose target is a maximal-repeat node and that remains after merging equal canonical keys, the stored values $\gamma.\leftpos$ and $\gamma.\rightpos$ satisfy $\str(v)=x_\gamma\str(u)y_\gamma$,
where $|x_\gamma|=\gamma.\leftpos$ and $|y_\gamma|=\gamma.\rightpos$.
Consequently, the compact edge label is $y_\gamma$.
\end{lemma}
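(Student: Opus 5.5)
The plan is to unfold the definitions of $q_\ell$ and $h_\ell$ into an explicit factorization of $h_\ell$, reverse it, and then compare the result with the CDAWG edge-label convention. Fix a surviving edge representation added for the pair $(w_\ell,a)$, with $aw_\ell$ occurring at least twice. By Lemma~\ref{lem:closure}, $m=|q_\ell|$ and $d=d^*$ are the string depths returned by the closure search. Since $q_\ell=\rightcl{aw_\ell}$ has $aw_\ell$ as a prefix, I write $q_\ell=aw_\ell y$ with $|y|=m-|w_\ell|-1$. By Corollary~\ref{cor:left-closure-search}, $h_\ell=\leftcl{q_\ell}$ has $q_\ell$ as a suffix of length $m$, so $h_\ell=zq_\ell=zaw_\ell y$ with $|z|=d$.

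Next I identify the node strings after reversal. By Lemmas~\ref{lem:merge} and~\ref{lem:marker}, the source and target of the reversed edge are the CDAWG nodes of $T^R$ corresponding to the maximal repeats $w_\ell$ and $h_\ell$. Because $\MRep(T^R)$ consists of the reversals of $\MRep(T)$, and the longest string of a CDAWG node is its maximal repeat, we get $\str(u)=w_\ell^R$ and $\str(v)=h_\ell^R=y^R\,w_\ell^R\,a\,z^R$. I then set $x_\gamma=y^R$ and $y_\gamma=az^R$. This gives $\str(v)=x_\gamma\str(u)y_\gamma$ with $|x_\gamma|=m-|w_\ell|-1=\gamma.\leftpos$ and $|y_\gamma|=1+d=\gamma.\rightpos$. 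Merging equal keys does not affect these values: each stored tuple comes from a single pair $(w_\ell,a)$, and by Lemma~\ref{lem:canonical-target} every tuple sharing the target key has the same $h_\ell$, hence the same $\str(v)$.

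The remaining claim is that the compact label of the CDAWG edge from $u$ starting with $a$ is exactly $y_\gamma=az^R$. The CDAWG convention I will use is that an edge $(u,v)$ with label $\lambda$ satisfies: $\str(u)\lambda$ is represented by $v$ (its end-set equals that of $\str(v)$), and every proper intermediate prefix $\str(u)\lambda'$ with $a$ a prefix of $\lambda'$ lies on a non-branching DAWG path. I will translate both conditions back to $T$.
\begin{itemize}
\item Reading $T^R$ right to left, the occurrences of $u\,a\,z^R$ correspond to occurrences of $zaw_\ell$ in $T$. By the right-closure step these are in bijection with occurrences of $q_\ell$, and hence of $h_\ell$. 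So $u\,a\,z^R$ has the same end-set in $T^R$ as $h_\ell^R$ and is represented by $v$.
\item For every $0\le d'<d$, the strings $u\,a\,z'^R$, where $z'$ is the length-$d'$ suffix of $z$, are not right-maximal in $T^R$. This is because validity holds at $d'+1$ (Corollary~\ref{cor:left-closure-search}), so all their occurrences continue with the same character.
\end{itemize}
Consequently, the DAWG path from $u$ reading $az^R$ is non-branching until it reaches $v$, and $y_\gamma$ is the compacted label.

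I expect this last step to be the main obstacle. The difficulty is being careful about which closure governs which side after reversal. The right closure in $T$ (the $y$ part) becomes a left-context prefix $x_\gamma$ that the DAWG node absorbs without adding label characters. The left closure in $T$ (the $z$ part) becomes the unary right continuation that forms the compacted label. I would also check the boundary case $d=0$, where the label is the single character $a$, which corresponds to a hard or primary link.
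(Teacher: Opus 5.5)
Your proposal is correct, and its core is exactly the paper's proof: factor $h_\ell = z\,a w_\ell\, y$ with $|y| = m-|w_\ell|-1$ and $|z| = d$, reverse to get $\str(v) = y^R\,\str(u)\,a z^R$, and read off $x_\gamma = y^R$ and $y_\gamma = az^R$. You also check that $az^R$ really is the compacted label: $\str(u)az^R$ has the same end-set in $T^R$ as $h_\ell^R$, and each intermediate $(z'aw_\ell)^R$ is not right-maximal because every occurrence of $aw_\ell$ is an occurrence of $q_\ell$ and is therefore preceded by $z$. That check is sound and makes explicit a step the paper only asserts.
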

\begin{proof}
The left-oriented endpoint has the form $h_\ell=zaw_\ell y$, where the middle factor $aw_\ell$ is the raw Weiner destination, $y$ is the right growth used to obtain $q_\ell=\rightcl{aw_\ell}$, and $z$ is the additional left growth used to obtain $h_\ell=\leftcl{q_\ell}$.  Since $|q_\ell|=m$, we have $|y|=m-|w_\ell|-1$, and since $d=|z|$, reversing $h_\ell$ gives $v=y^R w_\ell^R a z^R$.  Thus $x_\gamma=y^R$ and $y_\gamma=az^R$, proving the claimed lengths and the label formula.
\qed
\end{proof}

For every maximal-repeat target node $v$, the incoming edges to $v$ are sorted by increasing $\gamma.\leftpos$.  The rank of an incoming edge is its position in this order.  Hence $\gamma.\leftpos$ can be recovered from $(v,\gamma.\rank)$ by storing an array
\[
  \mathsf{LeftByRank}_v[\gamma.\rank]=\gamma.\leftpos.
\]
Together with $\gamma.\rightpos$, this gives the information needed to recover the compact label from $\str(v)$ and $\str(u)$ without explicitly storing the label.

Edges whose target is the sink use suffix-tail edge-label representations, each requiring $O(1)$ words.

The additional level-ancestor structure used by Belazzougui and Cunial~\cite{BelazzouguiCunial2017SPIRE} can be built within the same output-size bound.  
Their structure supports prefix extraction of each edge label in linear time in the prefix length,
and is built on the spanning tree of the CDAWG with edge directions reversed.
After the edges have been ordered as above, this tree is obtained by choosing, for every node except the sink in the reversed orientation, its first outgoing edge.  Hence the tree has $O(|V|)$ edges, where $V$ is the node set of the CDAWG, and the standard level-ancestor structure is constructed in linear time and space~\cite{BenderF04}.  Since $|V|=O(\eL)$, this preprocessing is within the claimed bound.

\begin{lemma}\label{lem:labels-correct}
The edge-label representations stored by the algorithm encode exactly the compact labels of the edges of $\mathrm{CDAWG}(S^R\$)$.
\end{lemma}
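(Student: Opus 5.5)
The argument splits the edges of $\mathrm{CDAWG}(S^R\$)$ by their target: maximal-repeat nodes and the sink. The topology comes from Lemma~\ref{lem:merge} and Lemma~\ref{lem:marker}. For every surviving edge, I will check that the stored data determine its compact label and nothing else.

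\textbf{Non-sink targets.} Fix an edge $\gamma$ from $u$ to $v$ in the right-oriented CDAWG. By Lemma~\ref{lem:edge-representations} it comes from at least one left extension $(w_\ell,a)$ with $u=w_\ell^R$ and $v=h_\ell^R$.

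First I note that the compact label of $\gamma$ is determined by $u$, $v$ and the first character $a$ alone. By Lemma~\ref{lem:merge}, distinct CDAWG edges leaving $u$ carry distinct first characters. Hence every edge representation merged into $\gamma$ has the same $(w_\ell,a)$, and the stored pair $(\gamma.\leftpos,\gamma.\rightpos)$ is well defined.

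Next I apply Lemma~\ref{lem:label-representation}. It gives $\str(v)=x_\gamma\str(u)y_\gamma$ with $y_\gamma=az^R$. I must argue that $y_\gamma$ is precisely the CDAWG edge label. Recall that a CDAWG edge from $u$ by $a$ goes to the node of the shortest maximal repeat extending $ua$, and its label is the portion of that string following $u$'s occurrence at the end. Under reversal, this shortest string is $\bothcl{aw_\ell}$ read backwards, by Lemma~\ref{lem:canonical-target}. The left growth $z$ becomes the right growth $z^R$ in $T^R$, and the right growth $y$ becomes the prefix $x_\gamma=y^R$ that the target node's longest string has in front of $u$.

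Then I verify that the recovery mechanism is unambiguous. Sorting the incoming edges of $v$ by $\leftpos$ makes $\mathsf{LeftByRank}_v$ a function of the rank. Extracting $\str(v)[|x_\gamma|+|\str(u)|+1..]$ of length $\gamma.\rightpos$ then yields exactly $y_\gamma$. The level-ancestor structure of Belazzougui and Cunial supplies the prefix extraction, and I take its correctness from their work.

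\textbf{Sink targets and the marker.} For an edge into the sink, $aw_\ell$ is unique in $T$. Its occurrence position is obtained from $\SA$ at the singleton interval, so the reversed label is the suffix of $T^R$ starting right after the occurrence of $u$ there. Storing that starting position and its length encodes the label in $O(1)$ words. I also check that removing the edge beginning with $\#$ (Lemma~\ref{lem:marker}) does not disturb any other label. This holds because $\#$ occurs in no maximal repeat, so it appears in no other edge label; the only thing left to confirm is that the trailing $\$$ in suffix-tail labels is consistent with $S^R\$$.

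\textbf{Main obstacle.} The delicate step is the identification of $y_\gamma=az^R$ with the CDAWG's compact label. One must confirm that no intermediate maximal repeat lies strictly between $ua$ and $v$. That is, $v$ must be the nearest CDAWG node reached from $u$ by $a$, rather than merely some node containing $u\,a\,z^R$ as a substring. I would argue this using minimality of the closures. In $T^R$, any maximal repeat extending $ua$ on the right with the same occurrence set contains $\bothcl{aw_\ell}^R$. Conversely, the first branching point along the DAWG path from $ua$ is exactly where both the right growth and the left growth stop, which is the content of Corollary~\ref{cor:left-closure-search}.
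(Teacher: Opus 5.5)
Your proposal is correct and takes essentially the same route as the paper. Both split the edges the same way: edges with maximal-repeat targets go through Lemma~\ref{lem:label-representation} plus recovery of $\gamma.\leftpos$ via $\mathsf{LeftByRank}_v$, edges into the sink use suffix-tail representations, and the $\#$-edge is removed by Lemma~\ref{lem:marker}; the step you flag as the main obstacle is already asserted by the last sentence of Lemma~\ref{lem:label-representation} together with Lemma~\ref{lem:canonical-target}, which is all the paper's proof uses, so your closure-minimality sketch is extra rather than required.
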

\begin{proof}
For edges whose targets are maximal-repeat nodes, Lemma~\ref{lem:label-representation} shows that $\gamma.\leftpos$ and $\gamma.\rightpos$ identify the compact label $y_\gamma$.  Sorting incoming edges by $\gamma.\leftpos$ and storing $\mathsf{LeftByRank}_v$ allows $\gamma.\leftpos$ to be recovered from the target and incoming rank, as required by the Belazzougui--Cunial representation.  Edges whose target is the sink use suffix-tail edge-label representations.  The unique edge from the node for $\eps$ whose label begins with $\#$ is deleted by Lemma~\ref{lem:marker}.  Hence the remaining edge-label representations encode exactly the edge labels of $\mathrm{CDAWG}(S^R\$)$ without expanding them.
\qed
\end{proof}

\subsection{Correctness and complexity}

\begin{theorem}\label{thm:main}
Let $T=\$S\#$, let $n=|T|$, and let $\eL=\eL(T)$.  Suppose that the fully functional compressed suffix-tree interface for $T$ occupies $\sGNP(T)$ words and supports each required primitive in $\tGNP(T)$ time.  Then the algorithm constructs $\mathrm{CDAWG}(S^R\$)$ in $O(\eL\log n\cdot \tGNP(T))$ time using $O(\sGNP(T)+\eL)$ working space.
\end{theorem}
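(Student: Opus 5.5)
The plan is to split the theorem into a correctness part and a complexity part, assembling each from the lemmas already proved.

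\textbf{Correctness.} Lemma~\ref{lem:edge-representations} gives that the added edge representations are in bijection with the pairs $(w,a)$ with $w\in\MRep(T)\cup\{\eps\}$ and $a\in\LChar_T(w)$. Lemma~\ref{lem:closure} and Lemma~\ref{lem:canonical-target} show that each non-sink target key is $\mathsf{key}(\bothcl{aw})$, and that equal keys mean equal CDAWG nodes. Lemma~\ref{lem:merge} then gives the left-oriented topology of $T$. Lemma~\ref{lem:marker} turns this into $\mathrm{CDAWG}(S^R\$)$, and Lemma~\ref{lem:labels-correct} certifies the edge-label representations.

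One point needs an explicit argument: the traversal must discover every maximal repeat, so that the loop ``for each discovered $w$'' really ranges over all of $\MRep(T)$. I would prove this by induction on $|h|$ for $h\in\MRep(T)$. Let $w$ be the longest proper suffix of $h$ that is a maximal repeat (possibly $\eps$). Write $h=x\,a\,w\,y$, so that $aw$ is a suffix of $h$ with $a\in\LChar_T(w)$. I would then argue that $\bothcl{aw}=h$. Since $aw$ is a suffix of $h$ and there is no maximal repeat strictly between $w$ and $h$ in the suffix order, the occurrences of $aw$ coincide (up to shift) with those of $h$. Concretely, $\bothcl{aw}$ is a maximal repeat having $aw$ as a substring, so its occurrences are at most those of $aw$, and I would show it equals $h$ by the minimality of $w$. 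Each newly discovered key is inserted into a dictionary or queue only once, which makes the enumeration well defined. This discovery/completeness argument is the main obstacle, since earlier lemmas assume the set of processed $w$ is exactly $\MRep(T)\cup\{\eps\}$. If the clean suffix argument fails, I would fall back on the suffix-link/Weiner-link duality: every CDAWG node other than the source has an incoming edge from a node of smaller depth in the left-oriented topology.

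\textbf{Time.} By Lemma~\ref{lem:extension-cost}, all closure computations cost $O(\eL\log n\cdot\tGNP(T))$. The range-distinct queries output $|\LChar_T(w)|$ items per $w$, costing $O(\eL\,\tGNP(T))$ in total. Duplicate detection of discovered keys and the final sort-and-merge over $O(\eL)$ keys cost $O(\eL\log\eL)\subseteq O(\eL\log n)$. Reversing the edges, building the $\mathsf{LeftByRank}_v$ arrays by sorting incoming edges, and constructing the level-ancestor structure on the spanning tree~\cite{BenderF04} all take $O(\eL\log n)$ time, using $|V|=O(\eL)$. Every term is dominated by $O(\eL\log n\cdot\tGNP(T))$, since $\tGNP(T)\ge1$.

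\textbf{Space.} The black-box interface uses $\sGNP(T)$ words. The edge representations, the discovered-key set and queue, the sorted arrays, the $\mathsf{LeftByRank}_v$ arrays and the level-ancestor structure each use $O(\eL)$ words. The closure search uses $O(1)$ extra words per extension, which gives $O(\sGNP(T)+\eL)$ overall.
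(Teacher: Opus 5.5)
Your proof is correct. Its skeleton is the paper's: correctness is assembled from Lemmas~\ref{lem:closure}--\ref{lem:labels-correct}, the time bound comes from Lemma~\ref{lem:extension-cost} plus an $O(\eL\log\eL)$ sort-and-merge, and the space bound is $\sGNP(T)$ plus $O(\eL)$ stored records.

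The one genuine difference is your discovery argument, and here you fill a real hole. The paper never proves that the loop over ``discovered'' $w$ reaches every element of $\MRep(T)$. Lemma~\ref{lem:edge-representations}, and with it the theorem, simply presupposes this. Your induction on $|h|$ does establish it, so the fallback to suffix-link duality is unnecessary. Two details should be tightened.

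First, since $w$ is a suffix of $h$, write $h=xaw$; the factor $y$ is necessarily empty.

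Second, the step ``no maximal repeat strictly between $w$ and $h$, hence the same occurrences'' needs one more line:
\begin{itemize}
\item Every suffix of the right-maximal string $h$ is right-maximal, so it has at least two occurrences.
\item Hence every suffix $u$ of $h$ with $|aw|\le|u|<|h|$, not being a maximal repeat, must fail left-maximality.
\item Since $\$$ occurs only at position $1$, every occurrence of such a $u$ has a left neighbour, and that neighbour must be the character preceding $u$ in $h$.
\end{itemize}
So $\occ_T$ is constant from $aw$ up to $h$. It follows that $aw$ is right-maximal, giving $\rightcl{aw}=aw$. Because $h$ is left-maximal, $\leftcl{aw}=h$. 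Also $\range(aw)$ is non-singleton, so the step from $w$ by $a$ indeed yields $\mathsf{key}(h)$.

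Your cost accounting is also more complete than the paper's proof of this theorem. You explicitly charge the range-distinct output sizes, the dictionary of discovered keys, the $\mathsf{LeftByRank}_v$ arrays, and the level-ancestor structure; the paper only asserts these costs elsewhere. The price is length, not a change of technique.
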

\begin{proof}
By Lemmas~\ref{lem:closure}--\ref{lem:merge}, the traversal produces exactly the edge representations of the left-oriented CDAWG topology of $T$, up to the final merging of canonical keys.  Lemma~\ref{lem:marker} converts this topology into $\mathrm{CDAWG}(S^R\$)$, and Lemma~\ref{lem:labels-correct} proves correctness of the compact edge-label representations.

Lemma~\ref{lem:extension-cost} bounds the total time for right- and left-closure computations and canonical-key computation by $O(\eL\log n\cdot\tGNP(T))$.  The remaining subtree-equivalence operations use the navigation primitives included in the black-box interface and are charged to the added edge representations.  The final sorting and merging step handles $O(\eL)$ canonical keys and takes $O(\eL\log \eL)$ time, which is bounded by $O(\eL\log n)$ and is absorbed by the preceding bound.  The compressed suffix-tree interface uses $\sGNP(T)$ words of space, while the added edge representations, keys, and edge-label representations use $O(\eL)$ additional words.
\qed
\end{proof}

\begin{corollary}\label{cor:gnp}
Let $T=\$S\#$, let $n=|T|$, let $r$ be the number of BWT-runs of $T$, and let $\eL=\eL(T)$.  Using the fully functional compressed suffix tree of Gagie, Navarro, and Prezza, the algorithm constructs $\mathrm{CDAWG}(S^R\$)$ in $O(\eL\log n\log(n/r))$ time using $O(r\log(n/r)+\eL)$ words of working space.
\end{corollary}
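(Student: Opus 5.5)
The corollary is a direct instantiation of Theorem~\ref{thm:main}. The plan is to plug in the bounds of the fully functional compressed suffix tree of Gagie, Navarro, and Prezza, which the preliminaries state as $\sGNP(S)=O(r\log(n/r))$ words and $\tGNP(S)=O(\log(n/r))$ time per primitive, applied with $S:=T$. So $n=|T|$ and $r$ is the number of BWT-runs of $T$.

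First, I would check that this structure supplies every primitive the algorithm uses. These are $\SA$, $\ISA$, $\LCP$, $\LCE$, LF-mapping, BWT range-distinct queries, and the suffix-tree navigation behind the subtree-equivalence test. The paper's interface is defined to be exactly this primitive set, and the preliminaries assert that the GNP structure supports it within the stated bounds. Hence the hypotheses of Theorem~\ref{thm:main} hold with $\tGNP(T)=O(\log(n/r))$ and $\sGNP(T)=O(r\log(n/r))$.

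Second, I would substitute these values into the theorem. The time bound $O(\eL\log n\cdot\tGNP(T))$ becomes $O(\eL\log n\log(n/r))$. The space bound $O(\sGNP(T)+\eL)$ becomes $O(r\log(n/r)+\eL)$ words.

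The only point needing care is the range-distinct query. Its cost is output-sensitive, one primitive per reported triple rather than $O(1)$ per call, so I would argue it separately. Summed over all maximal repeats $w$, the reported triples number $\sum_w|\LChar_T(w)|=\eL$. Therefore enumeration costs $O(\eL\,\tGNP(T))$ in total, which is absorbed by the main bound. This is already implicit in Lemma~\ref{lem:extension-cost}, so no real obstacle remains.

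Finally, I would note that $r\ge 1$, so $\log(n/r)$ is well defined. If $n/r$ is constant, the bound is read as $O(1)$ per primitive, following the usual convention.
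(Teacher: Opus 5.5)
Your proposal is correct and matches the paper's proof: you check that the Gagie--Navarro--Prezza structure supplies the required primitive set with $\sGNP(T)=O(r\log(n/r))$ and $\tGNP(T)=O(\log(n/r))$, then substitute these bounds into Theorem~\ref{thm:main}. Your separate accounting for the output-sensitive range-distinct queries, at $O(\eL)$ reported triples in total, is a harmless extra detail that the paper leaves implicit in Lemma~\ref{lem:extension-cost}.
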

\begin{proof}
For the primitive set used in the proof of Theorem~\ref{thm:main}---namely $\SA$, $\ISA$, $\LCP$, $\LCE$, LF-mapping, BWT range-distinct queries, and the suffix-tree navigation needed for the subtree-equivalence test---the representation of Gagie, Navarro, and Prezza gives $\sGNP(T)=O(r\log(n/r))$ words and $\tGNP(T)=O(\log(n/r))$ time.  Substituting these bounds into Theorem~\ref{thm:main} gives the claim.
\qed
\end{proof}

\section{Concluding Remarks}

We presented an output-sensitive construction of the CDAWG of a length-$n$ string with $\eL$ edges from BWT-run bounded primitives, which runs in $O(\eL\log n\log(n/r))$ time with $O(r\log(n/r)+\eL)$ working space. A key point is that the construction goes beyond maximal-repeat enumeration: hard Weiner links discover maximal-repeat nodes, while soft Weiner links are also processed because they correspond to secondary DAWG edges under reversal. Together with existing CDAWG-based edge-label extraction and pattern-matching machinery, the constructed representation can serve as a bridge between BWT-run bounded indexes and CDAWG-based compressed string processing.

\section*{Acknowledgements}
This work was supported by JSPS KAKENHI Grant Numbers JP26K02980~(HA),
JP23K24808, JP23K18466~(SI).

\clearpage

\bibliographystyle{splncs04}
\bibliography{ref}

\end{document}